\documentclass[conference]{IEEEtran}
\usepackage{color,graphicx}
\usepackage{amsmath}
\usepackage{amssymb}
\usepackage{amsthm}
\usepackage{amsfonts}
\usepackage[ruled]{algorithm2e}
\usepackage{enumerate}
\usepackage{multicol}
\usepackage{array}
\usepackage{url}
\usepackage{multirow}
\usepackage{subfigure}

\newtheorem{theorem}{Theorem}

\newtheorem{myDef}{Definition}
\newtheorem{example}{Example}
%%%%%%%%%%%%%%%%%%%%%%%%%%%%%%%%%%%%%%%%%%%%%%%%%%%%%%%%%%%%%%%%%%%%%%%%%%%%%%%%%%%%%%%%

\begin{document}

\title{On the Difficulty of Inserting Trojans in Reversible Computing Architectures}

\author{\IEEEauthorblockN{Xiaotong Cui\IEEEauthorrefmark{1},
Samah Saeed\IEEEauthorrefmark{2},
Alwin Zulehner\IEEEauthorrefmark{3},
Robert Wille\IEEEauthorrefmark{3},
Rolf Drechsler\IEEEauthorrefmark{4},
Kaijie Wu\IEEEauthorrefmark{5},
Ramesh Karri\IEEEauthorrefmark{5},}
\IEEEauthorblockA{\IEEEauthorrefmark{1}Chongqing University, China}
\IEEEauthorblockA{\IEEEauthorrefmark{2}University of Washington Tacoma, USA}
\IEEEauthorblockA{\IEEEauthorrefmark{3}Johannes Kepler University Linz, Austria}
\IEEEauthorblockA{\IEEEauthorrefmark{4}University of Bremen, German}
\IEEEauthorblockA{\IEEEauthorrefmark{4}New York University, USA}}

\maketitle

\begin{abstract}
Fabrication-less design houses outsource their designs to 3rd party foundries to lower fabrication cost. However, this creates opportunities for a rogue in the foundry to introduce hardware Trojans, which stay inactive most of the time and cause unintended consequences to the system when triggered. Hardware Trojans in traditional CMOS-based circuits have been studied and Design-for-Trust (DFT) techniques have been proposed to detect them.

Different from traditional circuits in many ways, reversible circuits implement one-to-one, bijective input/output mappings. We will investigate the security implications of reversible circuits with a particular focus on susceptibility to hardware Trojans. We will consider inherently reversible circuits and non-reversible functions embedded in reversible circuits.
\end{abstract}

\IEEEpeerreviewmaketitle

%%%%%%%%%%%%%%%%%%%%%%%%%%%%%%%%%%%%%%%%%%%%%%%%%%%%%%%%%%%%%%%55

\section{Introduction}

Most established computing paradigms are not reversible including the building block operations such as NAND/NOR that they use. While it is possible to infer the inputs when a NAND  outputs a 0 (then, both inputs are 1), it is not possible to unambiguously infer the inputs if the NAND outputs 1.

Alternative computing paradigms are gaining interest. \emph{Reversible computations} are bijective $n$-input $n$-output functions that map each input combination to a unique output combination. Reversibility is useful in  implementing quantum computing architectures, since quantum computations are inherently reversible~\cite{2000:QCQ:544199,BBC+:95,DBLP:conf/rc/NiemannBCJW15}. In fact, many components of quantum computers such as the database in case of Grover's Search \cite{Gro:96} or modular exponentiation in case of Shor's algorithm \cite{Sho:94}) are reversible. Hence, researchers built a reversible circuit first (using methods e.g. reviewed in \cite{SAZS:2011, DBLP:conf/ismvl/DrechslerW11}), which are afterwards mapped into a corresponding quantum circuit (using methods proposed in \cite{BBC+:95,GWDD:2009b}).

Besides, reversible logic continues to grow and show promise in low power computing, ~\cite{Landauer61,Ben:73,BAP+:2012}, adiabatic computing~\cite{363692}, circuit verification~\cite{DBLP:conf/date/AmaruGW16}, and optical computing~\cite{Cuykendall:87,roy2011all}.

Reversible circuits differ from conventional circuits in many ways. In order to implement reversibility, fan-out and feedback are not allowed and each circuit is realized as a cascade of reversible gates. %Functions that are not inherently reversible can be embedded into reversible circuit structures.
In this paper, we will study the supply chain security of reversible circuits with a focus on hardware Trojans.
Effects of different types of Trojans are studied and simple yet effective defenses are proposed. %This work will help understand the pros and cons of reversibility with respect to susceptibility to hardware trojans. %circuits, which provides hints to design and test of reversible computing systems.

The paper is organized as follows. Section \ref{sec:rev_backgroud} introduces reversible logic. Section \ref{sec:related_work} presents the threat model in reversible circuits. In Section \ref{sec:security}, we conduct studies on small-sized hardware Trojans in reversible circuits. Based on the characteristics of reversible circuits, simple yet effective methods are proposed to disable inserted hardware Trojans in Section \ref{sec:proposed}. Experiments on effectiveness of the proposed methods are presented in Section \ref{sec:experiments} and Section \ref{sec:conclusion} concludes the paper.

\section{Review on Reversible Logic}
\label{sec:rev_backgroud}

\begin{myDef}
A Boolean function $f:\mathbb{B}^n \rightarrow \mathbb{B}^n$ is reversible if it maps each input combination to a unique output combination.
\end{myDef}

A  reversible boolean function with $n$ inputs can be viewed as a mapping of a $Set=\{0, 1, \cdots, 2^n-1\}$ to itself. A reversible function can be represented as a truth table or a permutation matrix as shown in Fig. \ref{fig:rev}. Each input uniquely maps to an output and vice versa, i.e., the mapping is a bijection.

\begin{figure}[htbp]
\centering
\includegraphics[width=0.32\textwidth]{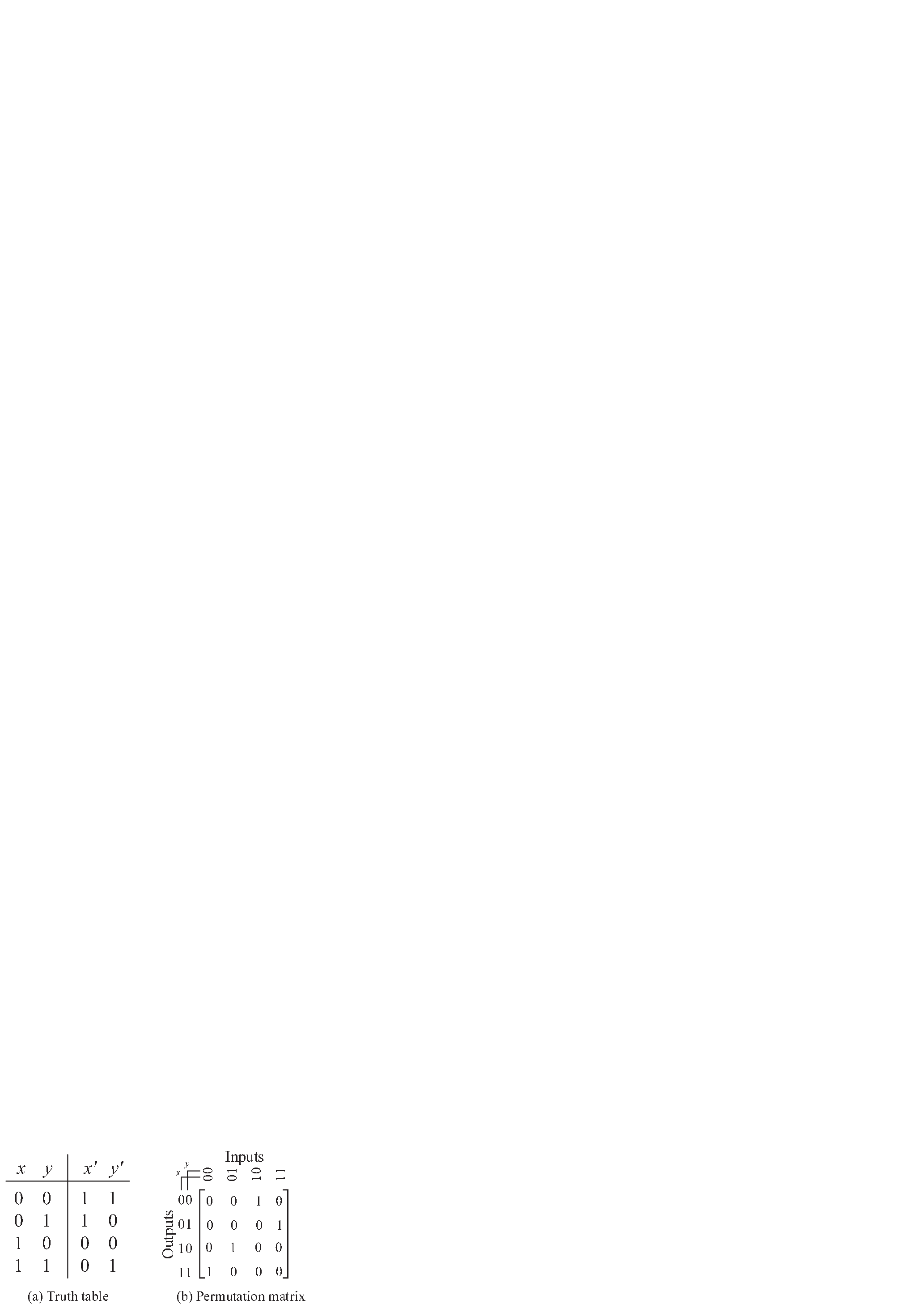}
\caption{Two representations of a reversible function.}
\label{fig:rev}
\end{figure}

Reversible circuits synthesized from reversible functions typically are cascades of reversible gates. In addition to components that are naturally reversible such as wire, Not, and Swap gates, Fredkin, Toffoli and Feynman are frequently used in the reversible/quantum computing literature \cite{taha2016fundamentals}.
\begin{myDef}
A $k$ input, $k$ output gate is reversible if it realizes a reversible function.
\end{myDef}
Toffoli gate will be considered in this paper. A $k$-input Toffoli gate has $k-1$ control lines and 1 target line, where the control lines can be either positive or negative.
\begin{myDef}
A $k\times k$ Toffoli gate passes k-1 control lines unchanged, and inverts the target line if values of all its positive (negative) control lines are 1 (0).
\end{myDef}

For simplicity, only positive control lines are considered in  this paper. Consider the general Toffoli gate $TOFk(x_1, x_2, \cdots, x_k)=(x'_1, x'_2, \cdots, x'_k)$ where the $k^{th}$ line is the target line, then
\begin{equation*}
\begin{split}
&x'_i = x_i (i<k),\\
&x'_k = x_1x_2\cdots x_{k-1}\oplus x_k.
\end{split}
\end{equation*}
$TOF1(x_1)$ is a NOT gate, and $TOF2(x_1, x_2)$ is a controlled-NOT gate (CNOT). Symbols of $TOF1(x_1)$, $TOF2(x_1, x_2)$ and $TOF3(x_1, x_2, x_3)$ are shown in Fig. \ref{fig:toffo}.

\begin{figure}[htbp]
\centering
\includegraphics[width=0.4\textwidth]{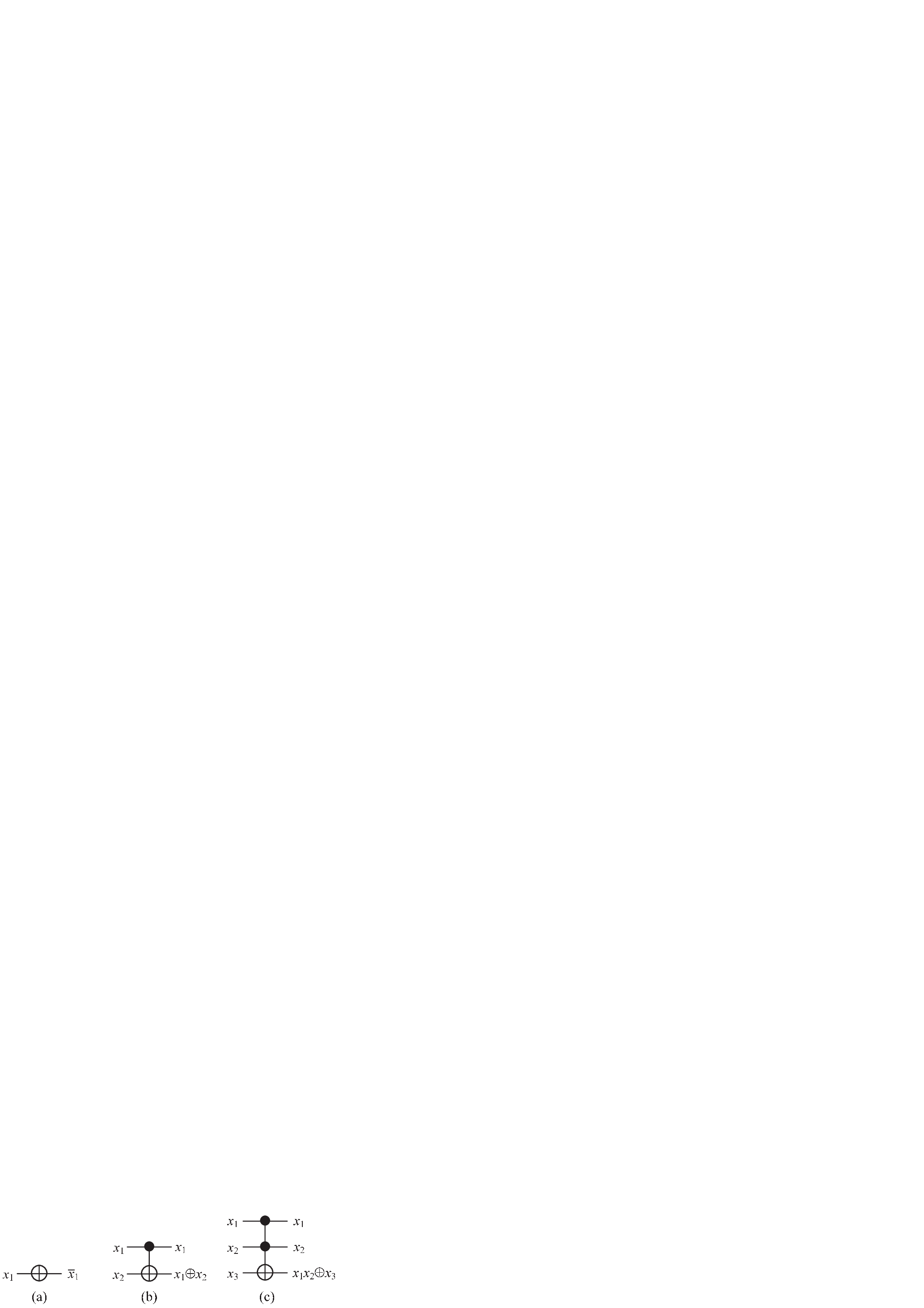}
\caption{Example reversible gates. (a) NOT, (b) C-NOT, (3) Toffoli.}
\label{fig:toffo}
\end{figure}

Arbitrary functions can be implemented as reversible circuits as cascades of reversible gates. Many approaches have been proposed to synthesize reversible circuits \cite{SPMH:2002,MillerMD03,YSHP:2005,GAJ:2006,wille2009bdd,SWH+:2012, ZulehnerW17Emb}. Ancillary inputs and garbage outputs are necessary when embedding non-reversible function into a reversible structure.

\begin{figure}[htbp]
\centering
\includegraphics[width=0.32\textwidth]{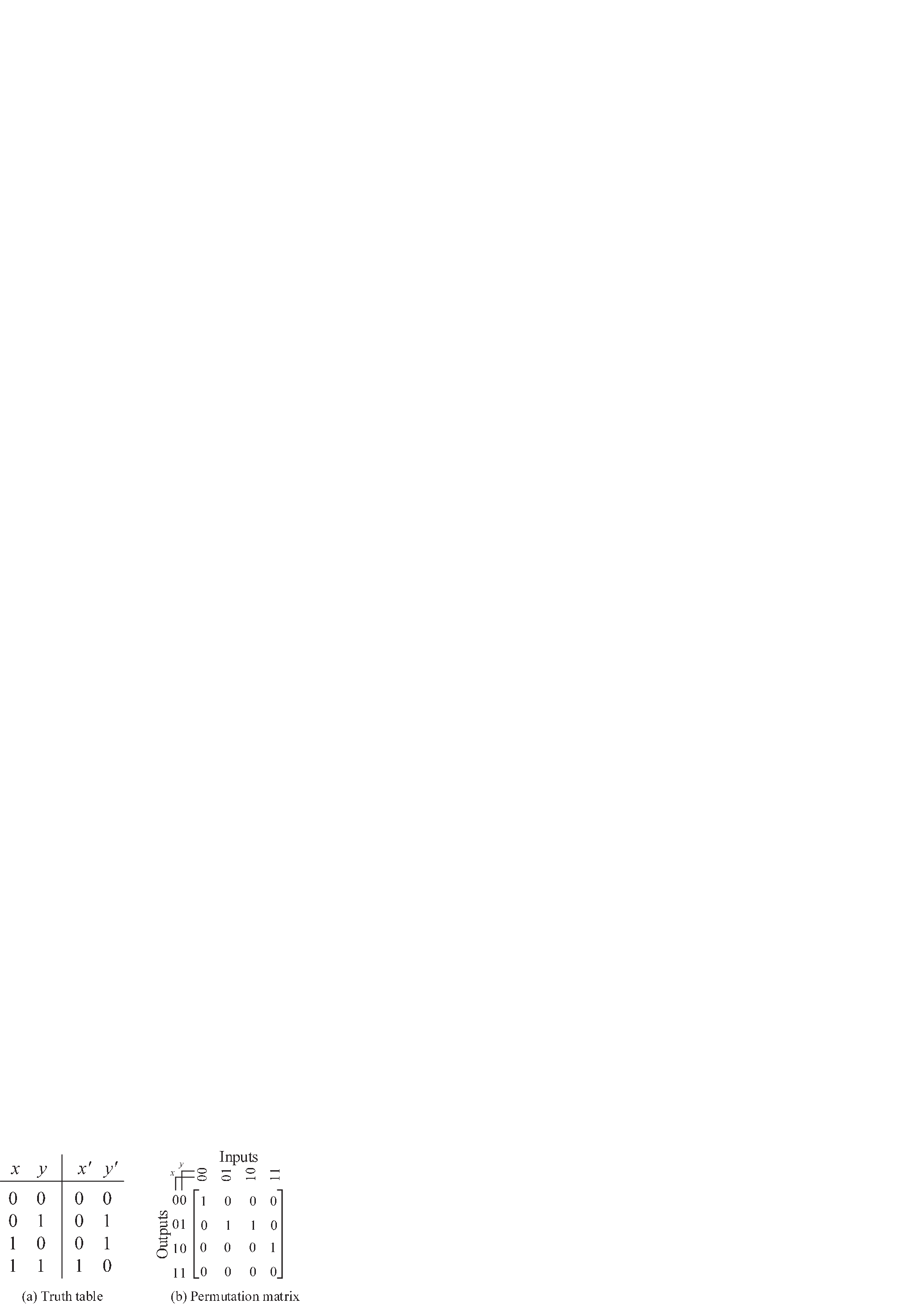}
\caption{Truth table and permutation matrix of a half adder.}
\label{fig:halfadd}
\end{figure}

\begin{example}
The half adder truth table is shown in Fig. \ref{fig:halfadd}(a). Input combinations 01 and 10 map to the same output 01. Fig. \ref{fig:halfadd}(b) shows its permutation matrix. Non-reversibility can be seen in the second row (output 01) of the matrix, which has two 1s in the row. The entries in the second and third columns represent input combinations 01 and 10. All 0s in the last row of the matrix violates reversibility, because no input combination is mapped to this output 11.
\end{example}

To make the half adder reversible, the first step is to add garbage output(s) to make sure each input pattern has a corresponding unique output pattern. In a half adder, there are two input patterns that map to one output pattern, so adding one garbage output bit is enough to create a unique input-output mapping. Next, we should make
sure that the number of inputs and the number of outputs are equal. One ancillary (constant) input is added to make the number of inputs and outputs same, as shown in TABLE \ref{tab:rehalf_1}. Any assignment to the garbage output \emph{g} and the ancillary input \emph{a} is a valid assignment as long as it creates a reversible function. Most prior work assigned constant values to ancillary inputs as shown in TABLE \ref{tab:rehalf_2}.

\begin{table}
\caption{Embedding the half adder.}
\label{tab:halfrev}
\centering
\subtable[Adding ancillary inputs and garbage outputs]{
       \begin{tabular}{ccc|ccc}
        $x$ & $y$ & $a$ & $x'$ & $y'$ & $g$ \\\hline
        0   &  0  &  0  &  0   &  0   &  *  \\
        0   &  0  &  1  &  $\cdot$   &  $\cdot$   &  $\cdot$  \\
        0   &  1  &  0  &  0   &  1   &  *  \\
        0   &  1  &  1  &  $\cdot$   &  $\cdot$   &  $\cdot$  \\
        1   &  0  &  0  &  0   &  1   &  *  \\
        1   &  0  &  1  &  $\cdot$   &  $\cdot$   &  $\cdot$  \\
        1   &  1  &  0  &  1   &  0   &  *  \\
        1   &  1  &  1  &  $\cdot$   &  $\cdot$   &  $\cdot$  \\
       \end{tabular}
       \label{tab:rehalf_1}
}
\quad
\subtable[One possible embedding]{
       \begin{tabular}{ccc|ccc}
        $x$ & $y$ & $a$ & $x'$ & $y'$ & $g$ \\\hline
        \textbf{0}   &  \textbf{0}  &  \textbf{0}  &  \textbf{0}   &  \textbf{0}   &  \textbf{0}  \\
        0   &  0  &  1  &  0   &  0   &  1  \\
        \textbf{0}   &  \textbf{1}  &  \textbf{0}  &  \textbf{0}   &  \textbf{1}   &  \textbf{0}  \\
        0   &  1  &  1  &  1   &  0   &  0  \\
        \textbf{1}   &  \textbf{0}  &  \textbf{0}  &  \textbf{0}   &  \textbf{1}   &  \textbf{1}  \\
        1   &  0  &  1  &  1   &  1   &  0  \\
        \textbf{1}   &  \textbf{1}  &  \textbf{0}  &  \textbf{1}   &  \textbf{0}   &  \textbf{1}  \\
        1   &  1  &  1  &  1   &  1   &  1  \\
       \end{tabular}
       \label{tab:rehalf_2}
}
\end{table}

\section{The Threat Model}
\label{sec:related_work}
\subsection{Hardware Trojans}
Globalization of the integrated circuit (IC) design flow has raised security concerns including malicious circuit modifications, referred as \emph{Hardware Trojans}. A hardware Trojan remains inactive and poses no threat to its host circuit until it is triggered. When triggered it will alter the original function \cite{Jin2009Hardware,tehranipoor2010survey}. Hardware Trojans are hard to detect since they are stealthy: 1) they are designed to be triggered under extremely rare conditions and 2) they are usually much smaller than their host circuits. Various Design-for-Trust techniques have been developed to detect and prevent hardware Trojans and/or recover from faults created by them \cite{xiao2013bisa,cui2014high}. One set of techniques increase the switching probability of the normally inactive nets  and hence facilitating the triggering and detecting processes \cite{salmani2012anovel}. Another set of techniques detect hardware Trojans by enhancing and monitoring the side-channel effects \cite{wang2008hardware,jin2008hardware, 7390050}. However, all of these techniques focus on CMOS-based designs that are inherently non-reversible.

\subsection{Trojans vs. Faults in Reversible Logic}
Hardware Trojans are to some extent similar to faults as they all affect the outputs when triggered.
Extending from this point, a hardware Trojan may be categorized as a new type of faults where the host circuit contains extra gates.
However, the key difference between faults and hardware Trojans is their origins: Faults originate from imperfect fabrication process or environment disturbance, while hardware Trojans are well-designed and carefully inserted by attackers with a detailed knowledge of the host circuit.
From the point of circuit testing, a common practice is to focus on a limited set of fault models and assume a single-gate failure \cite{patel2004fault,ramasamy2004fault, hayes2004testing, wille2011atpg}.
On the other hand, the size, structure, and functionality of a Trojan can be unknown and unforeseeable, which makes test patterns for Trojan detection much harder to generate.

Further, the excellent controllability and reversibility of reversible circuits makes testing for faults much easier. In a reversible circuit, reversibility exists not only between primary inputs and primary outputs but also between inputs and outputs of any internal gates. The inputs to any internal gate can be controlled easily from the primary inputs and the outputs of any internal gate can be observed at the primary outputs.
The excellent controllability and reversibility, however, benefit Trojan attackers as well. This is because all the nodes will have the same switching probability if random inputs are applied. As a result, a Trojan can be inserted at any position in a reversible circuit and receive the same triggering probability, which significantly increases the search space for detection. This is not the case in CMOS-based circuits where the switching probabilities of nodes vary significantly and inactive nodes are considered the best candidates for Trojan insertion \cite{salmani2012anovel}.

\subsection{The Threat Model}

Consider an attacker in the foundry who tries to insert a hardware Trojan during fabrication. The Trojan impacts the output of the host design when triggered. Fig.~\ref{fig:chain1} shows the supply chain of integrated circuits.

\begin{figure}[htbp]
\centering
\includegraphics[width=0.42\textwidth]{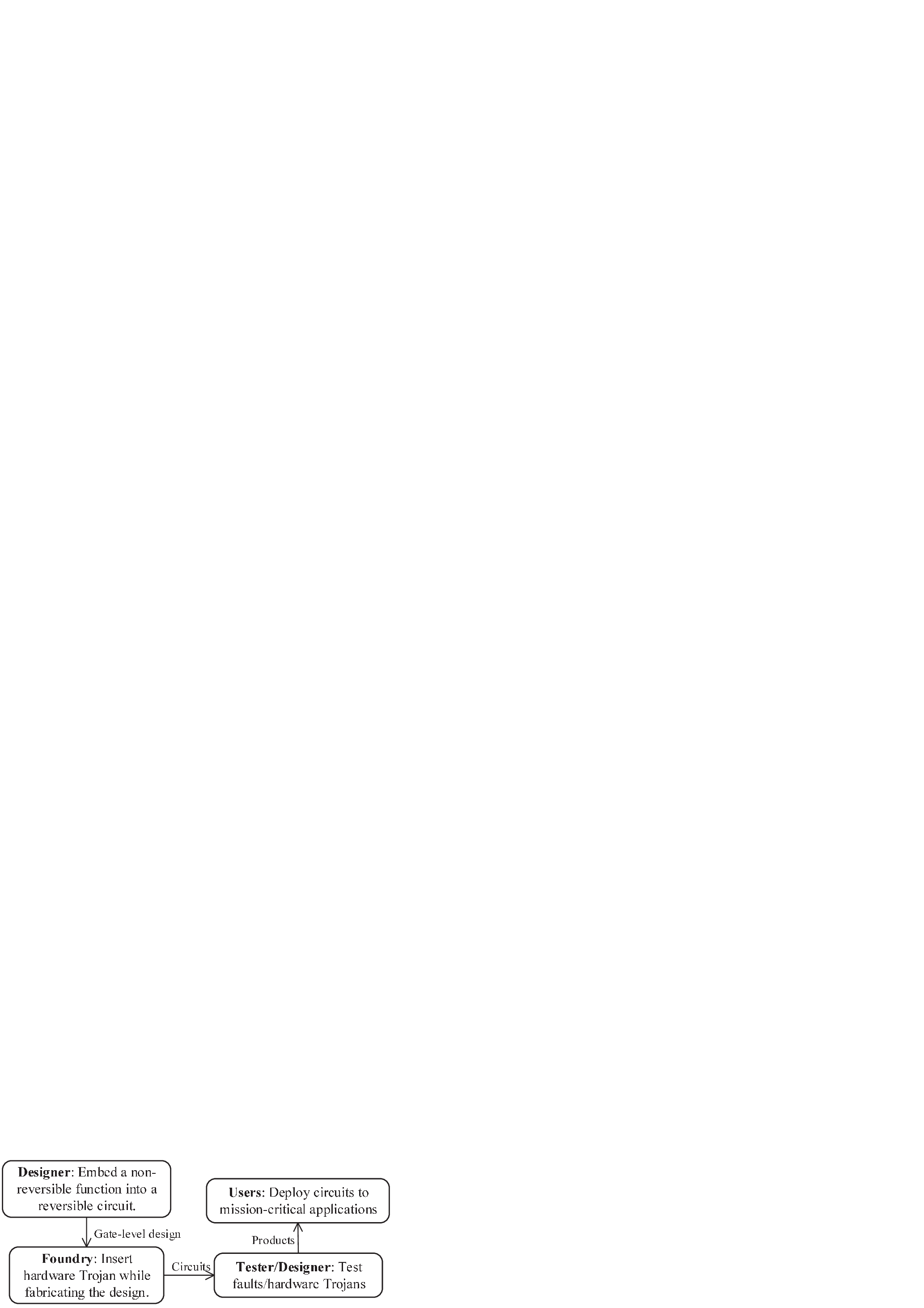}
\caption{The IC design flow. The attacker is in the foundry.}
\label{fig:chain1}
\end{figure}

The Trojan can be a simple $k$-Toffoli gate or complex one consisting of multiple $k$-Toffoli gates. $k$ can be different values for different gates. The hardware Trojan maintains reversibility since it uses Toffoli gates.

\section{Hardware Trojans in Reversible Circuits}
\label{sec:security}
Small hardware Trojans are always favored due to their minimal side-channel impacts on the host circuit including power, dimensions and delay. We start our analysis from small-sized Trojans to a general case.

\subsection{Single-Gate Trojans}

Consider a reversible circuit with $n$ primary inputs/outputs and a $k$-Toffoli Trojan gate. This Trojan will pass its input unchanged if at least one of its control lines is set to 0 and will affect the original circuit function when it is triggered, i.e., when all its control lines are set to 1.

\begin{theorem} A single-gate Trojan in a reversible circuit can be activated by at least two input patterns.
\end{theorem}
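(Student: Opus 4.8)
The plan is to turn the claim into a counting argument that exploits the bijectivity of the portion of the host circuit that precedes the Trojan. First I would fix notation: let the host circuit have $n$ lines, and suppose the inserted $k$-Toffoli gate acts on some $k$ of them, with $k-1$ of these serving as (positive) control lines and one as the target line. Write $C$ for the cascade of host gates lying \emph{before} the Trojan; being a cascade of reversible gates, $C$ realizes a bijection $C:\mathbb{B}^n\to\mathbb{B}^n$ from primary inputs to the line values at the cut immediately in front of the Trojan.

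Next I would count the internal patterns (line values at that cut) that trigger the Trojan. Triggering forces the $k-1$ control lines to carry $1$; the target line is unconstrained, contributing a factor $2$, and the remaining $n-k$ lines are unconstrained, contributing a factor $2^{n-k}$. Hence exactly $2^{\,n-k+1}$ internal patterns trigger the Trojan. Since a Toffoli gate placed inside the circuit can touch at most the $n$ available lines, $k\le n$, so this count is at least $2^{1}=2$; the degenerate case $k=1$ is a plain NOT gate with no control lines and is triggered by every input, so the bound is immediate there as well.

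Finally I would pull these internal patterns back through $C$. Because $C$ is a bijection, each triggering internal pattern has a unique primary-input preimage, and distinct internal patterns have distinct preimages, so the set of activating input patterns is exactly the $C$-preimage of the set of triggering internal patterns and therefore also has cardinality $2^{\,n-k+1}\ge 2$. This proves the statement. The argument is not technically deep; the only points that need care are justifying $k\le n$ (so that the exponent $n-k+1$ is nonnegative and the count is at least $2$) and invoking that every prefix of a reversible cascade is a bijection, which is exactly what lets the forward count of internal patterns transfer verbatim to a count of input patterns. It is worth noting that the bound is tight only in the extreme case $k=n$; for smaller $k$ the activating set has size exponential in $n-k$, which anticipates the later observation that random input patterns excite inserted Trojans all too readily.
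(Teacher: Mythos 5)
Your proof is correct and follows essentially the same counting argument as the paper: both count the $2^{\,n-k+1}\ge 2$ patterns that set all $k-1$ control lines to $1$ while leaving the target and unused lines free. Your version is in fact slightly more careful than the paper's, since you explicitly pull these patterns back to primary inputs through the bijection realized by the upstream sub-circuit, a step the paper leaves implicit.
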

\begin{proof}
When the Trojan consists of n-1 control lines and 1 target line, it can be triggered by assigning each control line to  1, and the target line to either 1 or 0. So there are two input patterns that can activate the Trojan. When the number of control lines is $k-1$ ($k\leq n$), there are $2^{n-k+1}$ patterns that can trigger the Trojan. As a result, there are at least 2  patterns that can activate the Trojan.
\end{proof}

Since the tester does not know which line will be the target line in the Trojan, between the two triggering patterns, the pattern that applies 1 to all lines is more effective in detecting a single-gate Trojan.
The pattern, which needs to apply 0 to the target line of a Trojan, is less effective since the tester must try out all $n$ lines to guarantee the triggering. This results in the a set of one-cold patterns where each of the $n$ inputs takes turn to be 0.
We hence refer the former as the \emph{All-1} pattern, and the latter as the set of \emph{One-Cold} patterns.
According to the following analysis, the \emph{All-1} pattern is more efficient in triggering a single-gate Trojan, while the set of \emph{One-Cold} patterns are useful in triggering any Trojans consisting of 4 gates.

\underline{The \emph{ALL-1} pattern}: Since most of the times a Trojan is inserted somewhere in the middle of the circuit as shown in Fig. \ref{fig:rtrojan}, applying an \emph{All-1} triggering pattern at $A$ and observing the error at output are necessary.

\begin{figure}[htbp]
\centering
\includegraphics[width=0.4\textwidth]{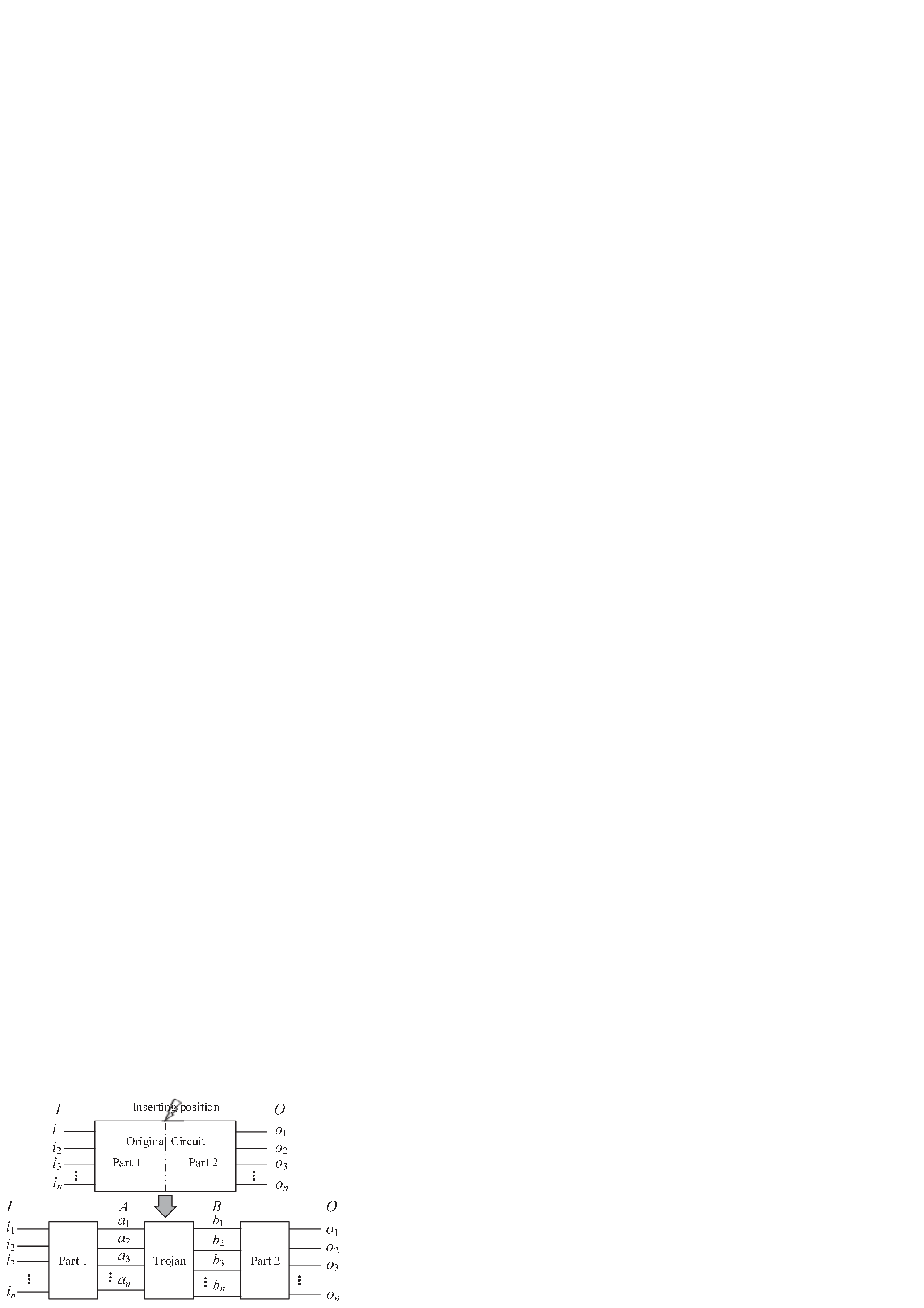}
\caption{Reversible circuit with Trojan insertion.}
\label{fig:rtrojan}
\end{figure}

\begin{theorem}
\label{theorem:t2}
It is sufficient to apply at most $m+1$ patterns at $I$ to trigger a single-gate Trojan inserted at ANY position in an $m$-gate reversible circuit, and observe the Trojan effects at $O$.
\end{theorem}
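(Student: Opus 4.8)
The plan is to exploit the defining property of a reversible circuit: every prefix of the gate cascade realizes a bijection on $\mathbb{B}^n$, so any desired intermediate pattern can be ``pulled back'' to a unique primary-input pattern. Write the $m$-gate host circuit as a composition $G = G_m \circ \cdots \circ G_1$ of reversible gates, and for $j = 0, 1, \ldots, m$ let $g_j = G_j \circ \cdots \circ G_1$ be the bijection computed by the first $j$ gates ($g_0$ the identity). A single-gate Trojan can occupy only one of the $m+1$ gaps of this cascade, namely immediately after gate $G_j$ for some $j \in \{0,\ldots,m\}$ (with $A$ in Fig.~\ref{fig:rtrojan} being the wire bundle at that gap).

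First I would define, for each gap $j$, the pattern $p_j = g_j^{-1}(1,1,\ldots,1)$, which exists and is unique precisely because $g_j$ is a bijection. Applying $p_j$ at $I$ forces the all-ones pattern to appear at the input of $G_{j+1}$, i.e.\ exactly at $A$ when the Trojan sits in gap $j$. The proposed test set is $P = \{p_0, p_1, \ldots, p_m\}$; since these patterns need not be distinct, $|P| \le m+1$, which is the claimed bound.

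Next I would establish triggering and observability together. Suppose the single-gate Trojan is a $k$-Toffoli gate placed in gap $j$. Under input $p_j$, all $n$ lines carry $1$ at $A$, so in particular the $k-1$ control lines of the Trojan are all $1$ and it fires, flipping its target line; the intermediate word at $A$ changes from $\mathbf{1}$ to $\mathbf{1}\oplus e_t$, where $e_t$ is the unit vector supported on the target line. This corrupted word is then processed by the downstream subcircuit $G_m \circ \cdots \circ G_{j+1}$, which is again a bijection, so its image at $O$ differs from the fault-free output $g_m(p_j)$ because $\mathbf{1}\oplus e_t \neq \mathbf{1}$. Hence, whichever of the $m+1$ gaps hosts the Trojan, the matching pattern $p_j \in P$ both activates it and exposes its effect at $O$, and $|P| \le m+1$.

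I expect the only real subtlety — rather than a genuine obstacle — to be bookkeeping: reversibility must be invoked twice, once on the prefix $g_j$ to get existence and uniqueness of $p_j$, and once on the suffix to get propagation of the flip all the way to $O$. One should also note explicitly that an all-ones word at $A$ triggers the Trojan no matter which of its $k$ lines is the designated target, so a single pattern per gap suffices and the less efficient \emph{One-Cold} patterns are not needed for the single-gate case.
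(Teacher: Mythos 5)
Your proposal is correct and follows essentially the same route as the paper's proof: pull back the \emph{All-1} pattern through the (reversible) prefix to get one input pattern per each of the $m+1$ insertion gaps, use the injectivity of the reversible suffix to propagate the flipped target line to $O$, and bound the test set by the number of gaps. Your version merely makes the paper's informal ``Controllability / Observability / number of patterns'' argument explicit with the prefix bijections $g_j$ and the pullbacks $p_j = g_j^{-1}(1,\ldots,1)$.
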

\begin{proof}
Controllability:
In order to apply the \emph{All-1} pattern at Trojan input $A$, the tester needs to find the corresponding pattern at circuit input $I$.
Luckily, due to the reversibility of the circuit and each sub-circuit (like Part 1 in the figure), there is always a pattern at $I$ that maps to the triggering pattern at $A$.

Observability:
Also due to the reversibility of Part 2, the effect induced by a triggered Trojan will always propagate an unexpected output observed at $O$.

The number of patterns: The pattern at $I$, however, depends on the function of Part 1 in Fig.~\ref{fig:rtrojan}, which in turn depends on the position of the Trojan.
Since in an $m$-gate reversible host circuit there are $m+1$ positions for inserting a Trojan, a $m+1$ input patterns at $I$ will guarantee the triggering of a single-gate Trojan inserted at ANY position.

These $m+1$ input patterns are sufficient to detect a single-gate Trojan.
\end{proof}

For easy reference, we still use the \emph{All-1} pattern (at $A$) to represent the $m+1$ patterns at $I$.

\underline{\emph{One-Cold} Patterns}:
\begin{theorem}
It is sufficient to apply at most $n\times(m+1)$ patterns at $I$ to trigger a single-gate Trojan inserted at ANY position in a $m$-gate reversible circuit, and observe its effect at $O$.
\end{theorem}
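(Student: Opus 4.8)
The plan is to mirror the argument used for Theorem~\ref{theorem:t2}, replacing the single \emph{All-1} pattern by the family of $n$ \emph{One-Cold} patterns and tracking the extra factor of $n$ through the counting.

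First I would establish the triggering property of the \emph{One-Cold} family at the Trojan input $A$. A single-gate Trojan is a $k$-Toffoli gate with $k-1$ (positive) control lines sitting on some subset of the $n$ lines and one target line, and the tester does not know which line is the target. Consider the \emph{One-Cold} pattern that puts $0$ on line $j$ and $1$ on every other line. If $j$ happens to be the target line, then all $k-1$ control lines --- being among the remaining $n-1$ lines --- carry the value $1$, so the Trojan is triggered. Hence, no matter which line is the target, at least one of the $n$ \emph{One-Cold} patterns applied at $A$ triggers the Trojan.

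Next I would handle controllability and the pattern count exactly as in Theorem~\ref{theorem:t2}. For a fixed insertion position, the pre-Trojan sub-circuit (Part~1 in Fig.~\ref{fig:rtrojan}) is reversible, so each of the $n$ \emph{One-Cold} patterns at $A$ has a unique preimage at the primary input $I$; this yields $n$ input patterns per candidate position. Since an $m$-gate host circuit admits $m+1$ insertion positions, taking the union of these preimage sets over all positions produces at most $n\times(m+1)$ patterns at $I$. For observability, the post-Trojan sub-circuit (Part~2) is likewise reversible, so whenever the Trojan flips its target line the resulting discrepancy is carried to a distinct, manifestly incorrect value at the primary output $O$.

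Finally I would combine the pieces: for the Trojan that is actually present, its true position is one of the $m+1$ considered, and for that position one of the $n$ precomputed preimages realizes a triggering \emph{One-Cold} pattern at $A$; reversibility of Part~2 then exposes the effect at $O$. Thus $n\times(m+1)$ patterns at $I$ suffice. The only real subtlety --- and the step I would be most careful about --- is the first one: arguing that the \emph{One-Cold} family is guaranteed to contain a triggering pattern despite the target line being unknown, and observing that this is precisely why $n$ patterns (rather than one) are needed at $A$. Everything else is the same reversibility-based controllability/observability bookkeeping as in Theorem~\ref{theorem:t2}, scaled by $n$.
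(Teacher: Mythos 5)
Your proposal is correct and takes essentially the same route as the paper, which simply states that the proof is analogous to that of Theorem~\ref{theorem:t2} with the single \emph{All-1} pattern replaced by the $n$ \emph{One-Cold} patterns, yielding $n\times(m+1)$ preimages over the $m+1$ insertion positions. Your explicit justification that the \emph{One-Cold} pattern placing the lone $0$ on the (unknown) target line necessarily sets every control line to $1$ is a detail the paper leaves implicit, but it is the same argument.
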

The proof is similar to Theorem \ref{theorem:t2}. Since each set of \emph{One-Cold} patterns consists of $n$ patterns for an $n$-line reversible circuit, the total number of inputs patterns that guarantee the detection of any single-gate Trojan is $n\times(m+1)$.
Similar to the \emph{All-1} pattern, we use the set of \emph{One-Cold} patterns at $A$ to represent the $n\times(m+1)$ patterns at $I$.

\subsection{Two-Gate Trojans}

For a two-gate Trojan, 2 cases arise.

Case 1: The two gates have different target lines as shown in Fig.~\ref{fig:2gate}(a). Triggering one of the gates triggers the Trojan. Both the \emph{All-1} pattern and the set of \emph{One-Cold} patterns can detect such two-gate Trojans. \emph{Thus, detection of this two-gate Trojan only needs partial triggering.}

\begin{figure}[htbp]
\centering
\includegraphics[width=0.42\textwidth]{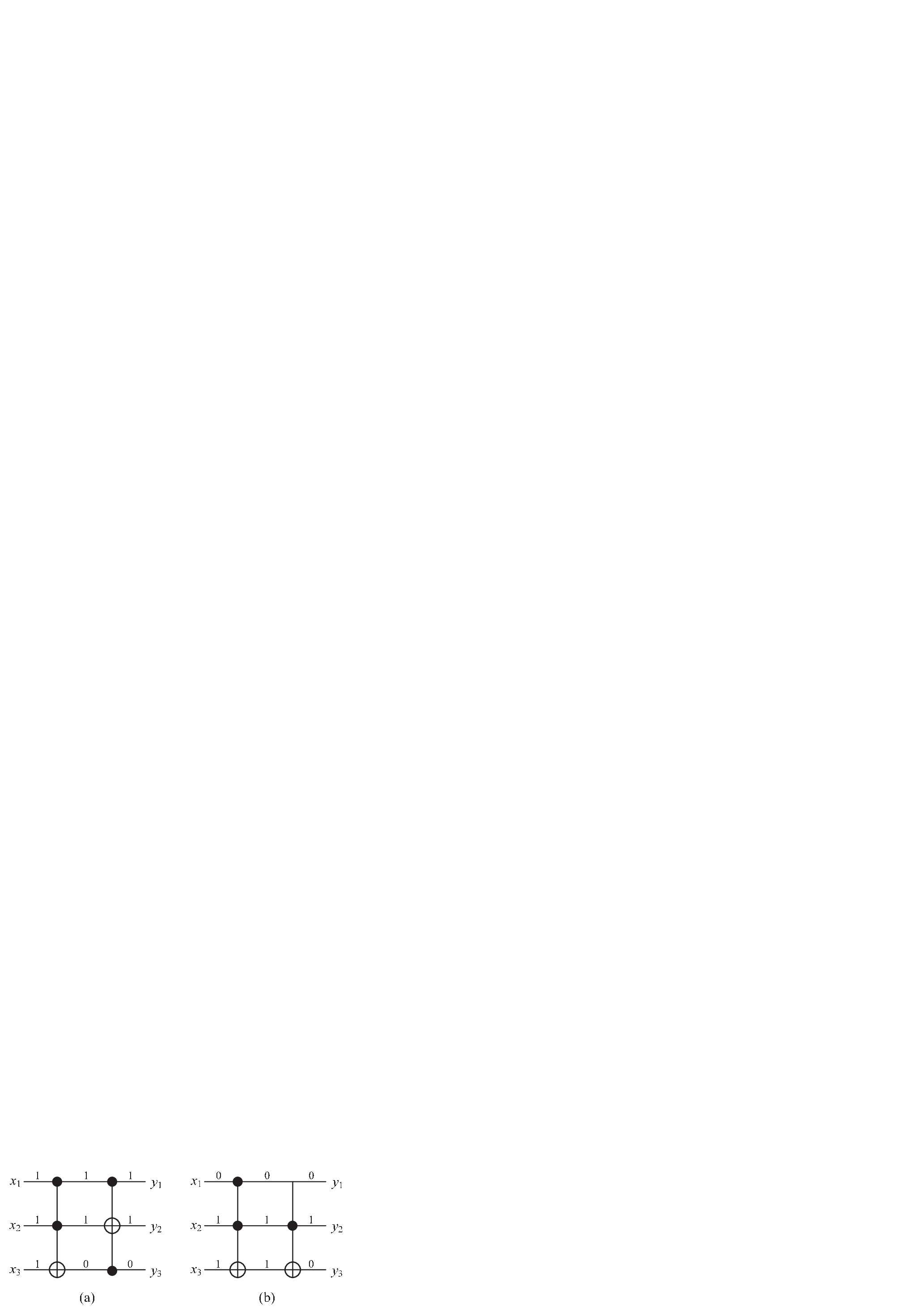}
\caption{Two-gate Trojan. (a) Two gates have different target lines; (b) Two gates have the same target line but different control lines.}% (c) Two gates are same.}
\label{fig:2gate}
\end{figure}

Case 2: The two gates have the same target line but different control lines, as shown in Fig.~\ref{fig:2gate}(b).
\begin{theorem}
At least one pattern from the set of \emph{One-Cold} patterns can detect the two-gate Trojan in Case 2 where both gates control the same target line.
\end{theorem}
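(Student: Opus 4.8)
The plan is to reduce the claim to a statement about the \emph{net} effect of the two Trojan gates on their shared target line, and then to exhibit one specific \emph{One-Cold} pattern that triggers exactly one of the two gates.

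First I would fix notation: let $t$ be the common target line of the Trojan gates $G_1$ and $G_2$, and let $C_1\neq C_2$ be their control-line sets, both subsets of $\{1,\dots,n\}\setminus\{t\}$. Since a Toffoli gate writes only to its target line, $G_1$ leaves every line in $C_2$ untouched and vice versa. Hence for any pattern $a=(a_1,\dots,a_n)$ applied at the Trojan input $A$, all lines except $t$ are passed through, and line $t$ emerges as $a_t \oplus \big(\bigwedge_{i\in C_1} a_i\big) \oplus \big(\bigwedge_{i\in C_2} a_i\big)$. Consequently the Trojan perturbs the signal at its output iff $\big(\bigwedge_{i\in C_1} a_i\big)\oplus\big(\bigwedge_{i\in C_2} a_i\big)=1$, i.e.\ iff \emph{exactly one} of the two gates is triggered. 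This already explains why the \emph{All-1} pattern fails in Case 2: it triggers both gates, so the two inversions of line $t$ cancel.

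Next I would evaluate this condition on a \emph{One-Cold} pattern $P_j$, the pattern with $a_j=0$ and $a_i=1$ for all $i\neq j$. For $P_j$ we have $\bigwedge_{i\in C_1}a_i=1$ precisely when $j\notin C_1$, and $\bigwedge_{i\in C_2}a_i=1$ precisely when $j\notin C_2$. Therefore the XOR above equals $1$ iff $j$ belongs to exactly one of $C_1,C_2$, that is, iff $j\in C_1\,\triangle\,C_2$. Since $C_1\neq C_2$, the symmetric difference $C_1\triangle C_2$ is nonempty and consists of valid line indices (both $C_i$ avoid $t$), so we may choose some $j\in C_1\triangle C_2$; the pattern $P_j$ triggers exactly one gate and hence perturbs line $t$ at the Trojan output. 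Finally I would invoke the controllability/observability argument from the proof of Theorem~\ref{theorem:t2}: by reversibility of the sub-circuit before the Trojan, some input at $I$ maps to $P_j$ at $A$, and by reversibility of the sub-circuit after the Trojan, the perturbation of line $t$ propagates to an incorrect value at $O$. Thus this \emph{One-Cold} pattern detects the Trojan.

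The only genuinely delicate step is the parity/cancellation observation in the second paragraph, together with the case analysis in the third on whether the ``cold'' line $j$ lies in $C_1$, in $C_2$, in both, or in neither. The whole argument hinges on recognizing that the effective behavior of the two same-target gates is the XOR of their trigger conditions, and that a \emph{One-Cold} pattern is exactly the tool that suppresses the trigger of whichever gate contains the cold line while leaving the other gate's trigger intact — which is guaranteed to happen for any $j$ that distinguishes $C_1$ from $C_2$.
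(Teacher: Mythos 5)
Your proof is correct and follows essentially the same route as the paper's: choose a \emph{One-Cold} pattern whose single $0$ lies on a control line of exactly one of the two gates, so that exactly one gate fires, the shared target line is inverted exactly once, and the perturbation propagates to $O$ by reversibility of the downstream sub-circuit. Your XOR/symmetric-difference formulation ($j\in C_1\,\triangle\,C_2$, which is nonempty since $C_1\neq C_2$ and avoids the target line) simply makes rigorous the paper's informal claim that such a pattern exists because the control-line sets differ.
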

\begin{proof}
Since the two gates have different control lines, there is at least one \emph{One-Cold} pattern among the set of $n$ patterns that will trigger only one of the two gates, i.e., the only $0$ is applied to a control line of the other gate. Since the other gate is not triggered, the effect induced by the triggered gate can always be passed through the down stream sub-circuit to output $O$, which leads to its detection.
\end{proof}
The \emph{All-1} pattern always trigger both gates. Since these two gates control the same target line, this target line will start from $1$ and is  inverted to $0$ by the first gate, then inverted back to $1$ by the second gate.

Note that there is a third case where both gates have the same control lines and control the same target line. This is a dummy case since both gates are either not triggered or triggered but the second gate always cancels the effect induced by the first gate. While this Trojan may still affect the host's side-channel parameters, we will leave them for future work.

\subsection{Three-Gate Trojans}
For a three-gate Trojan, 5 cases are illustrated in Fig.~\ref{fig:3gate}.

\begin{figure}[!htb]
\centering
\includegraphics[width=0.48\textwidth]{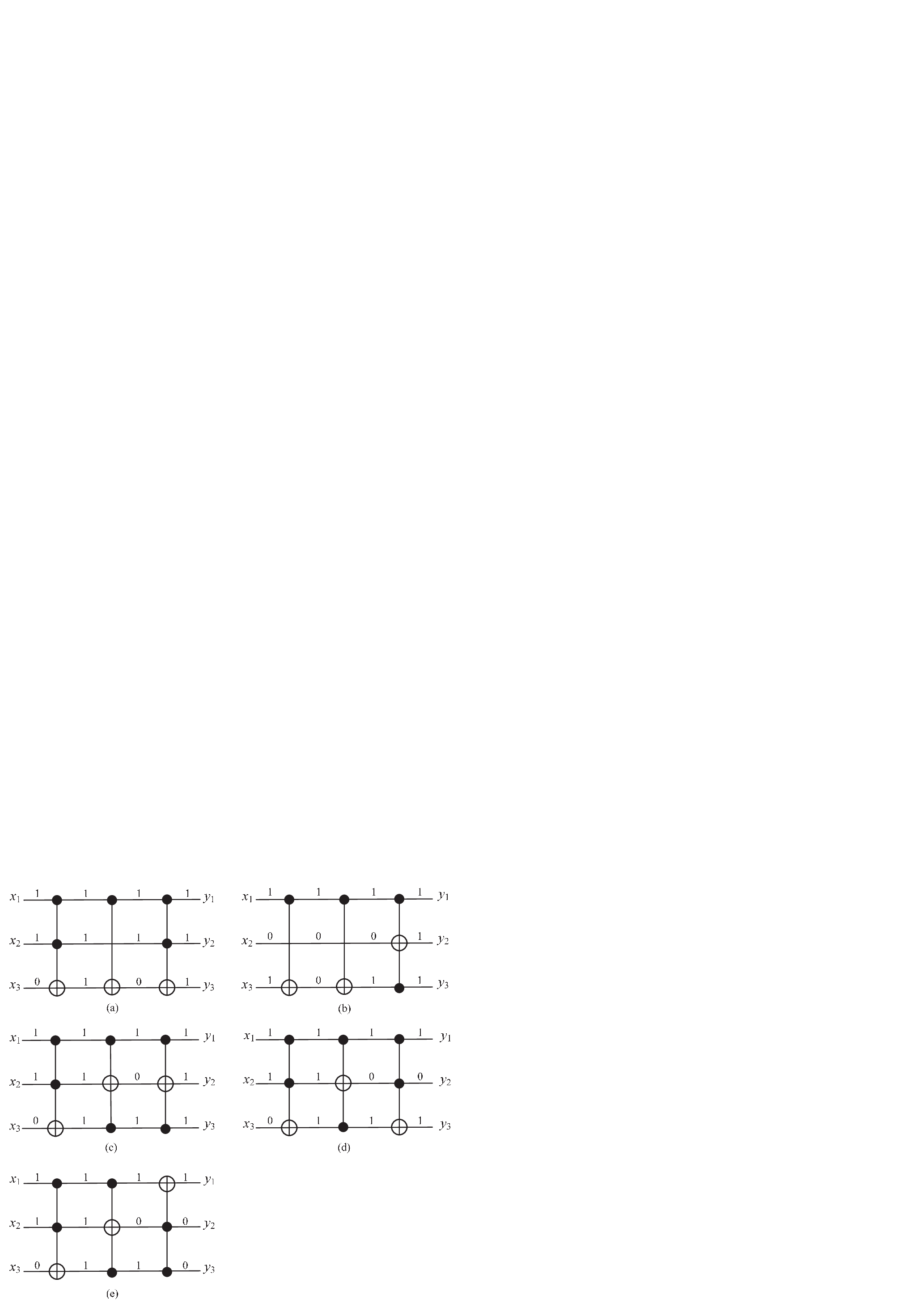}
\caption{All possible three-gate Trojans.}
\label{fig:3gate}
\end{figure}

Case 1: Three gates control the same target line shown in Fig.~\ref{fig:3gate}(a). Consider the \emph{All-1} pattern at $A$ and the \emph{One-Cold} pattern at $A$ where the 0 bit is on the target line. Both patterns will trigger all three gates and the target line will be inverted three times and end up being 0 or 1, respectively. This will result in an unexpected output and hence the Trojan is detected.

Case 2: The former two gates control the same target line. Triggering the third gate or triggering one of the former two gates will guarantee detection of the Trojan. The third gate is triggered by the \emph{All-1} pattern at $A$ or by the \emph{One-Cold} pattern where the 0 bit is on its target line as shown in Fig.~\ref{fig:3gate}(b). When the former two gates have different control lines, one of them can be triggered by applying \emph{One-Cold} patterns; this is Case 2 of 2-gate Trojans. When they have the same control lines, we can ignore them as mentioned in Case 3 of 2-gate Trojans.

Case 3: The last two gates control the same target line as shown in Fig.~\ref{fig:3gate}(c). Triggering the first gate will detect of the Trojan. Both the \emph{All-1} input pattern and the \emph{One-Cold} input pattern at $A$ where the 0 bit is on the target line of the first gate will trigger it.

Case 4: The first and the last gates have the same target line as shown in Fig.~\ref{fig:3gate}(d). This forms a symmetric structure; effects brought by the first gate will be canceled by the third gate when the middle gate is not triggered. The question then is if the middle gate is guaranteed to be detected by using either the \emph{All-1} pattern or the set of \emph{One-Cold} patterns. The answer is yes.
Consider a \emph{One-Cold} pattern at $A$ (which is also the input to the first gate) where the 0 bit is on the target line of the first gate. Since all the other bits are 1, the first gate is triggered and its target line flips $0\rightarrow1$. This results in an \emph{All-1} pattern at the input of the middle gate triggering it. The \emph{All-1} pattern at $A$ will not guarantee triggering of the middle gate. This is because the 1 on the target line of the first gate is inverted to 0 and this 0 may happen to be on the control line of the middle gate.

Case 5: All 3 gates control different target lines as shown in Fig.~\ref{fig:3gate}(e). Triggering any of these gates is enough to detect the Trojan and this can be achieved by either the \emph{All-1} pattern or the set of \emph{One-Cold} patterns.

In summary, the set of \emph{One-Cold} patterns guarantee the detection of any 3-gate Trojans inserted at any position of the host circuit.

\subsection{Four-Gate Trojans}
One can similarly conclude that the set of \emph{One-Cold} patterns guarantees the detection of any 4-gate Trojan inserted at any place in the host circuit.

\subsection{Symmetric Trojans}
For a Trojan with more than four gates, a sophisticated attacker may hide the impact of input patterns with symmetric parts as shown in Fig. \ref{fig:symm}(a).
Unfortunately, neither the \emph{All-1} pattern nor the set of \emph{One-Cold} patterns guarantees detection of such a Trojan.

In Fig.~\ref{fig:symm}(a), a symmetric Trojan is represented as $tTt^{-1}$, where $t$, $T$ and $t^{-1}$ are the left, middle  and the right part of the symmetric Trojan.
The symmetric parts, $t$ and $t^{-1}$, have at least two gates each and have the same set of control lines. They will be triggered by the same set of patterns and control the same set of target lines. The triggering of the symmetric parts $t$ and $t^{-1}$ does not impact the output of the host circuit. Thus such symmetric Trojans can be detected by triggering $T$. However, both patterns fail to do so.
\begin{itemize}
\item When the \emph{All-1} pattern arrives at $A$, $t$ will be triggered and its target lines will be inverted to 0s. Since $t$'s target lines are the control lines to $T$, $T$ won't be able to see all 1s at its input and will remain inactive.
\item If the set of \emph{One-Cold} patterns are applied at $A$, since there is only one 0 bit in the pattern, at most one of the $t$'s target lines will be inverted to 1. $T$ won't see all 1s at its input either and will remain inactive.
\end{itemize}

\begin{figure}[!htb]
\centering
\includegraphics[width=0.45\textwidth]{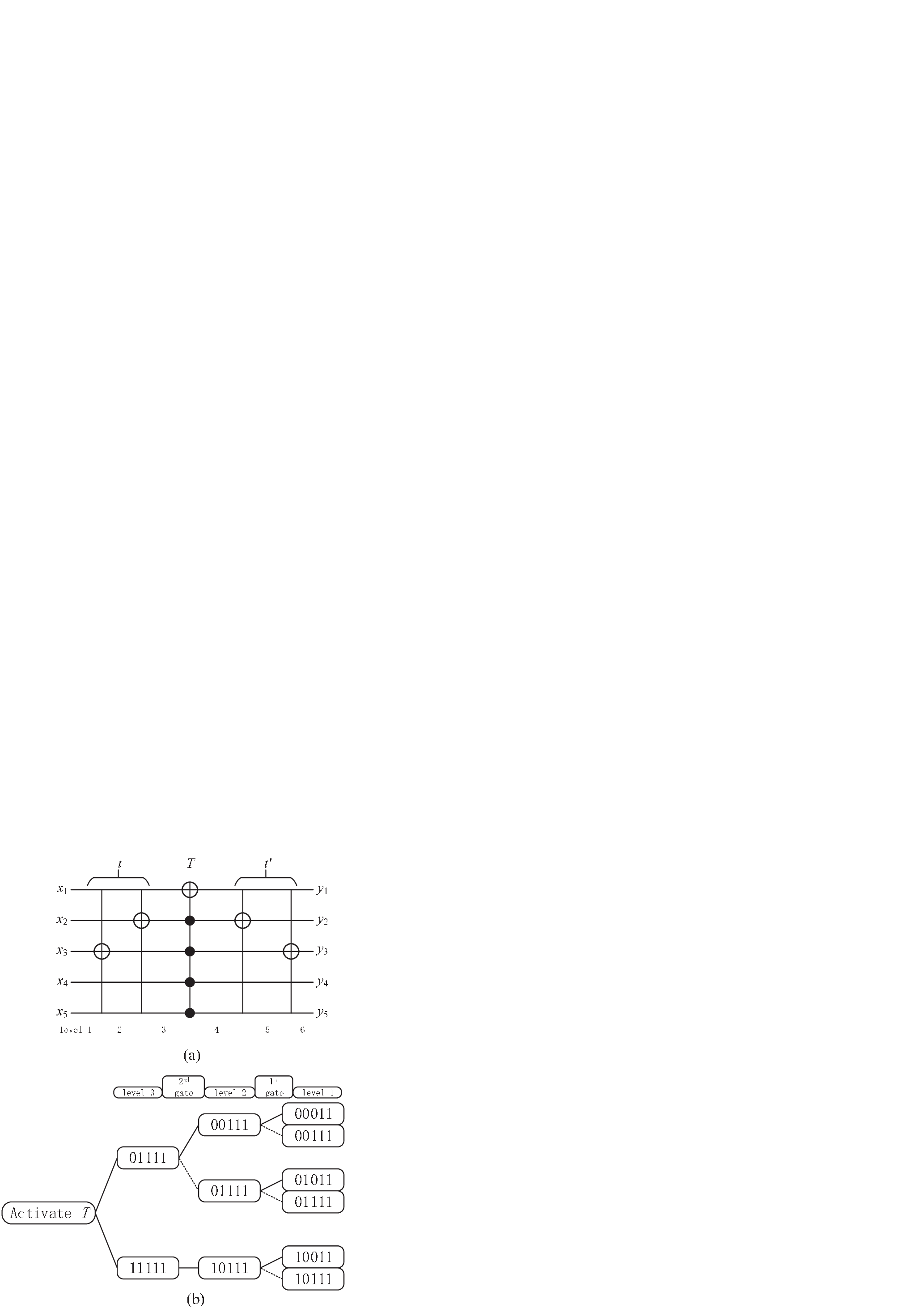}
\caption{Symmetric Trojans. (a) Trojan circuit consists of $tTt^{-1}$; (b) The triggering diagram for $T$; solid and dotted lines represent gates triggered and not triggered respectively. The control lines of $t$ and $t'$ are not shown in the figure.}
\label{fig:symm}
\end{figure}

Fig.~\ref{fig:symm}(b) shows how to trigger $T$. The triggering of $T$ requires the input pattern in level 3 to be $<0,1,1,1,1>$ or $<1,1,1,1,1>$. If we want to apply $<0,1,1,1,1>$ in level 3, the input pattern in level 2 has to be $<0,0,1,1,1>$ when the $2^{nd}$ gate is triggered (solid line) or $<0,1,1,1,1>$ when the $2^{nd}$ gate is not triggered (dotted line). The remaining steps are similarly derived. Each path in the diagram indicates at least one design of a symmetric Trojan. And each leaf of a path represents the pattern that arrives at the Trojan and can eventually activate $T$ part.

As the \emph{All-1} and the set of \emph{One-Cold} patterns cannot guarantee the detection of symmetric patterns, we propose to thwarting hardware Trojans by disabling them in the functional mode. The next section presents the proposed technique.

\section{Detecting and Thwarting Trojans in Reversible Circuits}
\label{sec:proposed}
\subsection{Generalizing to Arbitrary Trojans}

According to the principle of reversibility in Fig. \ref{fig:rtrojan}, a primary input pattern $<i_1, i_2, \cdots, i_n>$ maps correctly to a primary output pattern $<o_1, o_2, \cdots, o_n>$ only if the corresponding intermediate result $<a_1, a_2, \cdots, a_n>$ is the same as $<b_1, b_2, \cdots, b_n>$, i.e., the Trojan passes its input unchanged. Primary input patterns that make an inconsistent mapping between $<a_1, a_2, \cdots, a_n>$ and $<b_1, b_2, \cdots, b_n>$ trigger the Trojan, resulting in unexpected primary outputs.
Since a Trojan is reversible by itself, the size of its truth table is the same as its host circuit.
A Trojan should not be triggered for all the entries in its truth table where the input equals the output (\emph{equal I/O pairs}).

We can use $D=\frac{Num(equal~I/O~pairs)}{Num(all~I/O~pairs)} \leq 1$ to represent the difficulty of triggering a Trojan. The larger the D, the more difficult it is to detect/trigger the Trojan when random patterns are applied.

Fig.~\ref{fig:example} shows reversible circuits with a Trojan gate in dotted rectangle. In Fig.~\ref{fig:example}(a), the Trojan gate has a single control and target line. When the control line is set to 1, the Trojan is activated and $D=0.5$.
In Fig.~\ref{fig:example}(b), the Trojan gate has 2 control lines and 1 target line. It will be activated only when both its control lines receive 1. This translates to 6-out-of-8 I/O pairs and results in $D=0.75$.
The truth tables of the two Trojans are shown in Fig.~\ref{fig:example}(c), where triggering I/O pairs are in bold.
\begin{figure}[htbp]
\centering
\includegraphics[width=0.4\textwidth]{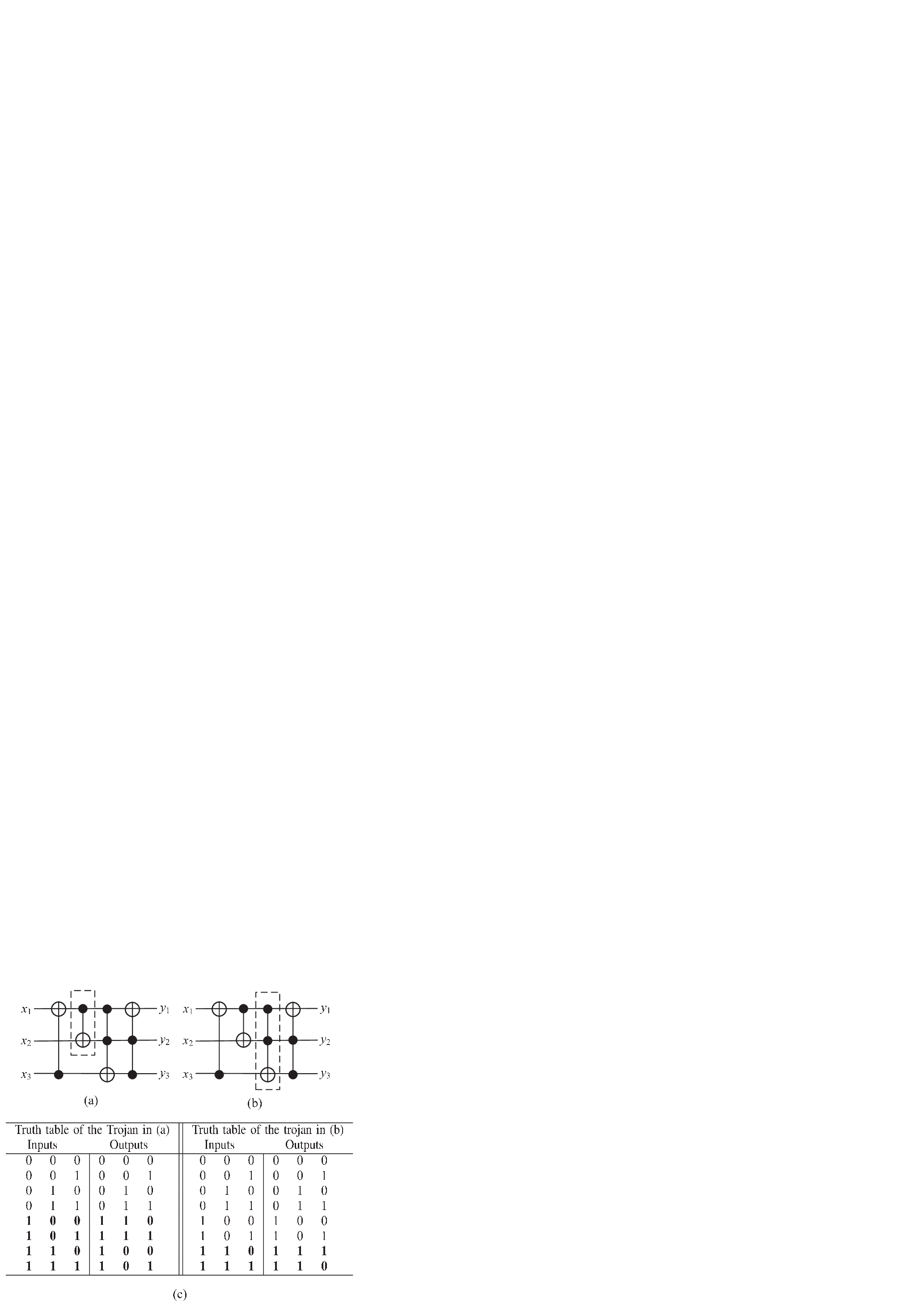}
\caption{Reversible circuits with an arbitrary Trojan gate in dotted rectangle.}
\label{fig:example}
\end{figure}

\subsection{Detecting Trojans in Reversible Circuits w/o Ancillary Inputs}
In inherently reversible circuits, the probability that each line is a 1 (or 0) is 0.5. Hence applying random test patterns at primary inputs can be a basic detection method since the randomness at the primary inputs will be passed to the Trojan's input no matter where it is inserted. Hence the difficulty/chance of detecting a random Trojan in inherently reversible circuits is directly related to the $D$ of this Trojan.

\subsection{Detecting Trojans in Reversible Circuits w/ Ancillary Inputs}
Non-reversible functions, such as add and multiply, can be embedded into a reversible circuit. This is a more popular case since most functions are not reversible.
The embedding process first adds garbage outputs to map each possible input to a unique output, and then adds ancillary inputs to make the number of inputs equal to the number of outputs. Ancillary inputs can be assigned random values just like primary inputs when the reversible circuit is under testing and assigned a constant value (0~or~1) in the functional mode.

Let us use the half adder example. The truth table of a half adder embedded in a reversible circuit is shown in TABLE \ref{tab:rehalf_2}. In the test mode, all entries of the truth table can be exercised. In the functional mode, only the bold entries are visited.
If a triggering pattern of a Trojan is one of these never-reached entries, it will never be received by the Trojan in the functional mode. If all triggering patterns of a Trojan are the never-reached entries, the Trojan is never triggered in the functional mode.
This is illustrated in Fig. \ref{fig:diff}. Each gate is annotated with its triggering probability. The ancillary inputs result in a 0 triggering probability of the fourth gate in Fig. \ref{fig:diff}(b).

\textbf{This property presents a  dilemma to a Trojan designer.
On one hand, to avoid being triggered and hence detected in the test mode where all inputs, including both primary and ancillary inputs could be exercised for testing, Trojan designers have to minimize the number of patterns that can trigger their Trojans. That is, make  $D$ close to 1. On the other hand, the less the number of triggering patterns, the higher the chance they are never exercised in the functional mode where ancillary inputs will only take fixed values.}
This can be exploited by the designers of host circuits to (partially or completely) disable Trojans in the functional mode.

\begin{figure}[htbp]
\centering
\includegraphics[width=0.42\textwidth]{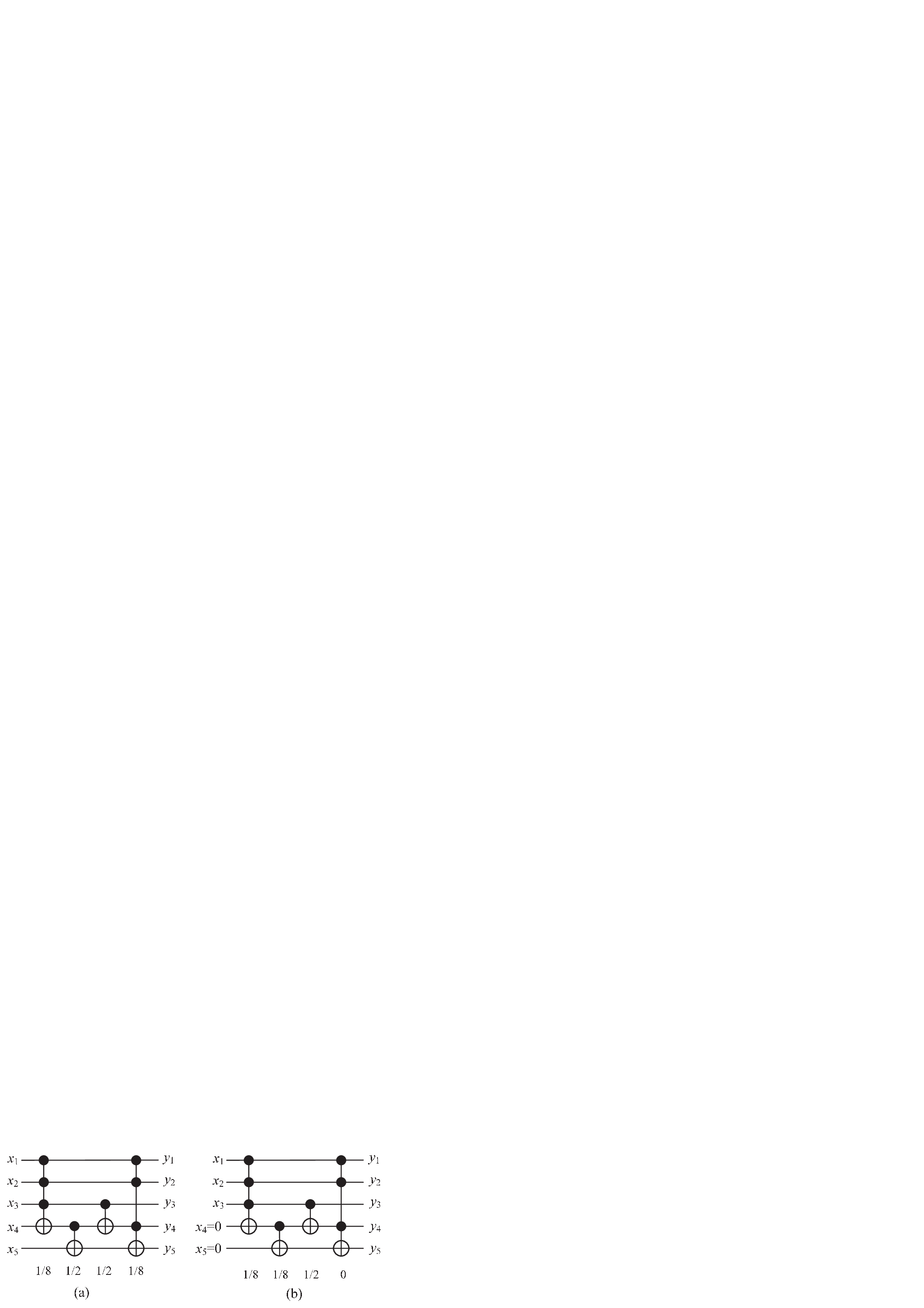}
\caption{Triggering a Trojan in reversible circuits (a) without ancillary inputs and (b) with ancillary inputs.}
\label{fig:diff}
\end{figure}

\subsection{Scramble Inputs and Outputs to Thwart Trojans}
\label{sec:pospro}
Consider scrambling of inputs. The scrambling can be achieved in two ways. First, one can scramble the inputs to the original function by exploiting the existing ancillary inputs. Second, introduce ancillary inputs to improve scrambling. The designer scrambles in the design phase and keeps it secret from the foundry, as shown in Fig.~\ref{fig:chain2}. Only authorized users  obtain and use the scrambling code (the assignments to ancillary inputs).
\begin{figure}[htbp]
\centering
\includegraphics[width=0.46\textwidth]{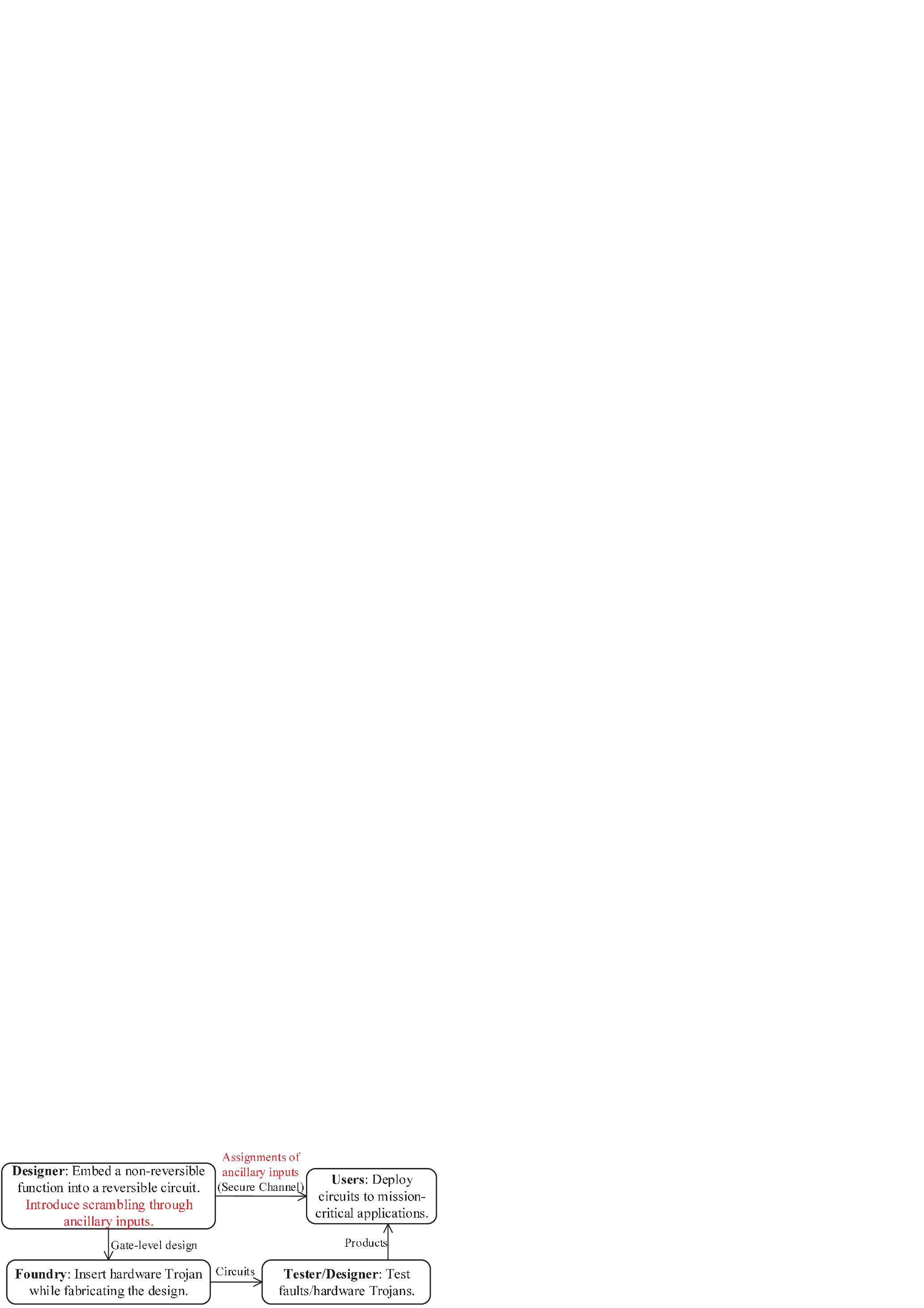}
\caption{Designer scrambles I/O pairs through ancillary inputs.}
\label{fig:chain2}
\end{figure}
\subsubsection{Exploit Ancillary Inputs}

For simplicity, ancillary inputs are usually assigned all 0 or all 1 during functional mode. The truth table of \emph{decod24\_10} with two ancillary inputs being 0 is shown in TABLE \ref{tab:assign_1}. This assignment is easy to crack by an attacker since only two guesses are necessary.
Scrambling can be more sophisticated, since the ancillary inputs do not have to be a fixed constants. They can be designed to take any values in the functional mode, as long as the resulting truth table is reversible.

Starting with a reversible circuit with $p$ primary inputs and $c$ ancillary inputs that take all 0 (or all 1) for all
primary input patterns, the first thing we can do is to make some of the ancillary inputs 0 and the rest 1. One example is shown in TABLE \ref{tab:assign_2}. The number of guesses for the values taken by ancillary inputs will increase up to $2^c$.

To further decrease successful guessing, the values of ancillary inputs can be randomly chosen for each input pattern, as long as these input patterns map to correct output patterns. One example is shown in TABLE \ref{tab:assign_3}. This makes the number of guesses up to $2^{p+c}$.
Additional ancillary inputs can be added to improve scrambling, at extra cost.

\begin{table}
\caption{Assignments of ancillary inputs of \emph{decod\_24\_10}.}
\label{tab:halfrev}
%\centering
\resizebox{0.51\textwidth}{0.035\textwidth}{
\subtable[The assignment of all ancillary inputs being 0]{
       \begin{tabular}{c|c}
        $x_1x_2a_1a_2$ &$y_1y_2y_3y_4$\\\hline
        \textbf{0 0 0 0} & \textbf{0 0 0 1}\\
        0 0 0 1 & 0 0 0 0\\
        0 0 1 0 & 0 1 1 0\\
        0 0 1 1 & 0 1 1 1\\
        \textbf{0 1 0 0} & \textbf{0 0 1 0}\\
        0 1 0 1 & 0 1 0 1\\
        0 1 1 0 & 1 1 0 0\\
        0 1 1 1 & 0 0 1 1\\
        \textbf{1 0 0 0} & \textbf{0 1 0 0}\\
        1 0 0 1 & 1 0 0 1\\
        1 0 1 0 & 1 1 1 0\\
        1 0 1 1 & 1 1 1 1\\
        \textbf{1 1 0 0} & \textbf{1 0 0 0}\\
        1 1 0 1 & 1 1 0 1\\
        1 1 1 0 & 1 0 1 0\\
        1 1 1 1 & 1 0 1 1\\
       \end{tabular}
       \label{tab:assign_1}
}
\quad
\subtable[The assignment of half of ancillary inputs being 0]{
       \begin{tabular}{c|c}
        $x_1x_2a_1a_2$ &$y_1y_2y_3y_4$\\\hline
        0 0 0 0 & 0 1 1 0\\
        0 0 0 1 & 0 0 0 0\\
        \textbf{0 0 1 0} & \textbf{0 0 0 1}\\
        0 0 1 1 & 0 1 1 1\\
        0 1 0 0 & 1 1 0 0\\
        0 1 0 1 & 0 1 0 1\\
        \textbf{0 1 1 0} & \textbf{0 0 1 0}\\
        0 1 1 1 & 0 0 1 1\\
        1 0 0 0 & 1 1 1 0\\
        1 0 0 1 & 1 0 0 1\\
        \textbf{1 0 1 0} & \textbf{0 1 0 0}\\
        1 0 1 1 & 1 1 1 1\\
        1 1 0 0 & 1 0 1 0\\
        1 1 0 1 & 1 1 0 1\\
        \textbf{1 1 1 0} & \textbf{1 0 0 0}\\
        1 1 1 1 & 1 0 1 1\\
       \end{tabular}
       \label{tab:assign_2}
}
\quad
\subtable[The assignment of random ancillary inputs]{
       \begin{tabular}{c|c}
        $x_1x_2a_1a_2$ &$y_1y_2y_3y_4$\\\hline
        0 0 0 0 & 0 0 0 0\\
        \textbf{0 0 0 1} & \textbf{0 0 0 1}\\
        0 0 1 0 & 0 1 1 0\\
        0 0 1 1 & 0 1 1 1\\
        0 1 0 0 & 1 1 01 0\\
        0 1 0 1 & 0 1 0 1\\
        \textbf{0 1 1 0} & \textbf{0 0 1 0}\\
        0 1 1 1 & 0 0 1 1\\
        1 0 0 0 & 1 1 1 1\\
        1 0 0 1 & 1 0 0 1\\
        1 0 1 0 & 1 1 1 0\\
        \textbf{1 0 1 1} & \textbf{0 1 0 0}\\
        1 1 0 0 & 1 1 0 1\\
        \textbf{1 1 0 1} & \textbf{1 0 0 0}\\
        1 1 1 0 & 1 0 1 0\\
        1 1 1 1 & 1 0 1 1\\
       \end{tabular}
       \label{tab:assign_3}
}}
\end{table}

\subsubsection{Probability that a Trojan is Never Triggered}
Assume a $p$-input non-reversible function is embedded to a $n$-input reversible circuit with $c$ ancillary inputs ($n=p+c$).
The attacker sees a circuit whose truth table has $2^n=2^{p+c}$ entries. He then designs a Trojan that has $t$ triggering patterns. $t$ is chosen to be small enough to pass testing using random patterns picked from total $2^{p+c}$ patterns. However, during functional mode, all or part of the $t$ patterns may never be reached. The probability is calculated as follows.
\begin{itemize}
    \item the probability that no input patterns trigger the Trojan is $\frac{{2^{p}\choose0}{{{2^{p+c}-2^{p}}\choose {t}}}}{{2^{p+c}\choose{t}}}$, i.e., all $t$ patterns are never reached.
    \item the probability that 1 input pattern triggers the Trojan is $\frac{{{2^{p}}\choose1}*{{2^{p+c}-2^{p}}\choose {t-1}}}{{{2^{p+c}}\choose {t}}}$, i.e., other $t-1$ patterns are never reached.
    \item the probability that 2 input patterns trigger the Trojan is $\frac{{{2^{p}}\choose2}*{{2^{p+c}-2^{p}}\choose {t-2}}}{{{2^{p+c}}\choose {t}}}$;
    \item $\cdots$
    \item the probability that $i$ input patterns trigger the Trojan is $\frac{{{2^{p}}\choose i}*{{2^{p+c}-2^{p}}\choose {t-i}}}{{{2^{p+c}}\choose {t}}}$;
\end{itemize}

Fig. \ref{fig:protri} illustrates the effect of ancillary inputs on triggering a Trojan. The number of patterns that can trigger the Trojan is set to $t=8$. The number of primary inputs and ancillary inputs is varied from 3 to 20 and from 1 to 20, respectively. The probability that the Trojan is never triggered is $\sim$ 0 when there are few ancillary inputs no matter the number of primary inputs. A Trojan in a reversible circuit can always be triggered in such a situation. However, this probability increases exponentially with the number of ancillary inputs. A small number of ancillary inputs can disable triggering a Trojan with a very high probability. As a result, extra ancillary inputs can protect the host from Trojan even when there are insufficient number of ancillary inputs created during embedding.
\begin{figure}[htbp]
\centering
\includegraphics[width=0.42\textwidth]{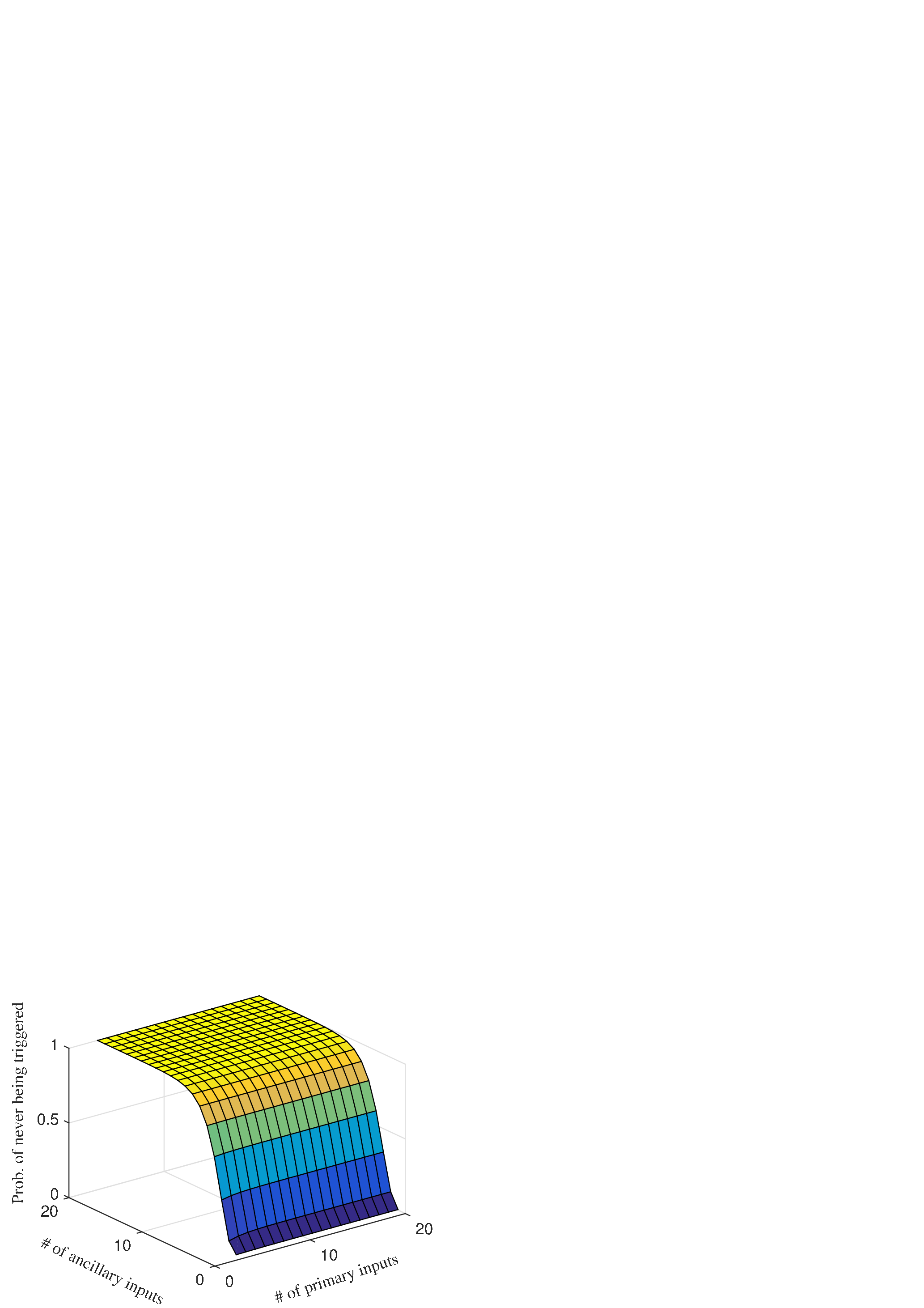}
\caption{The probability that a Trojan is never triggered as a function the number of primary inputs and ancillary inputs.}
\label{fig:protri}
\end{figure}

%%%%%%%%%%%%%%%%%%%%%%%%%%%%%%%%%%%%%%%%%%%%%%%%%%%%%%%%%%%%%%%%%

\section{Experiments}
\label{sec:experiments}

The experiments use the QMDD synthesis approach that takes as input a reversible truth table, outputs its implementation in reversible logic, and reports the cost \cite{ZulehnerW17Emb,6165069}. Benchmark circuits are taken from \emph{revlib} \cite{WGT+:2008}.
The scrambling technique in Section \ref{sec:proposed} is partitioned into several levels, including all ancillary inputs being 0 ($baseline$), half ancillary inputs being 0 and the other half being 1 ($Lv1$), random ancillary inputs for each primary input pattern ($Lv2$), and adding extra 1 ancillary input ($Lv3$).

To have a fair comparison on the probability of disabling Trojans across different scrambling levels, the study is carried on the Trojans that have only 1 triggering pattern. The Trojans with this assumption always have a minimum probability being detected in the test mode, which is favored by the attacker.
Assuming random inputs patterns are applied, we can then calculate the probability of such Trojans being disabled in the functional mode.
Since scrambling levels $baseline$, $Lv1$ and $Lv2$ have the same number of ancillary inputs, they have the same probability of disabling same Trojans.
After adding an extra ancillary input, the probability will be improved. Experimental results of $baseline-Lv2$, $Lv3$ are shown in TABLE \ref{tab:pro}.

From TABLE \ref{tab:pro}, we can observe that, without adding any extra ancillary inputs, all benchmarks have at least 50\% chance of disabling the Trojans that have just 1 triggering pattern. Some benchmarks are even close to 100\%. With one more extra ancillary inputs, the probability is always higher.
This is great in thwarting Trojans in reversible circuits since the attackers may need to re-evaluate the trade-off between the great risk they will bear by inserting these Trojans and the usefulness of these Trojans.

\begin{table}[!htbp]
\centering
\caption{The probability that the Trojan is disabled in the functional mode.}
\label{tab:pro}
\renewcommand{\arraystretch}{1.1}
\resizebox{0.5\textwidth}{0.32\textwidth}{
\begin{tabular}{l|l|l|ll}
\hline
\multirow{2}{*}{Benchmark} & \multirow{2}{*}{Total inputs} & \multirow{2}{*}{constants} & \multicolumn{2}{c}{Probability of being disabled (\%)} \\
               &    &   & Baseline-Lv2    & Lv3     \\\hline
decod24\_10 &4  &2  &75.0 &87.5    \\
4gt12\_24   &5  &1  &50.0 &75.0 \\
decod24\_32 &6  &3  &87.5 &93.8 \\
f2\_158     &7  &3  &87.5 &93.8  \\
rd53\_68    &7  &2  &75.0 &87.5  \\
rd73\_69    &9  &2  &75.0 &87.5 \\
sqn\_203    &9  &2  &75.0 &87.5 \\
squar5\_206 &9  &4  &93.8 &96.9 \\
wim\_220    &9  &5  &96.9 &98.4 \\
dc1\_142    &10 &6  &98.4 &99.2  \\
dist\_144   &10 &2  &75.0 &87.5\\
root\_197   &10 &2  &75.0 &87.5\\
clip\_124   &11 &2  &75.0 &87.5\\
rd84\_70    &11 &3  &87.5 &93.8 \\
sqr6\_204   &12 &6  &98.4 &99.2\\
cm42a\_125  &13 &9  &99.8 &99.9 \\
dc2\_143    &13 &5  &96.9 &98.4 \\
mlp4\_184   &13 &5  &96.9 &98.4\\
pm1\_192    &13 &9  &99.8 &99.9\\
alu2\_96    &14 &4  &93.8 &96.9 \\
alu3\_97    &14 &4  &93.8 &96.9\\
example2\_156&14&4  &93.8 &96.9\\
inc\_170    &14 &7  &99.2 & 99.6\\
misex1\_178 &14 &6  &98.4 &99.2 \\
sao2\_199   &14 &4  &93.8 &96.9 \\
co14\_135   &15 &1  &50.0 &75.0  \\
dk27\_146   &15 &6  &98.4 &99.2 \\
x2\_223     &16 &6  &98.4 &99.2\\
t481\_208   &17 &1  &50.0 &75.0 \\\hline
Average     &   &   &85.8 &92.9\\\hline
\end{tabular}}
\end{table}

The overhead of all scrambling levels is also evaluated for each benchmark circuit in terms of the number of lines (Line Cost), the number of gates (Gate Cost) and the number of primitive reversible logic gates (1*1, 2*2) required to realize the circuit (Quantum Cost). A Toffoli gate may consist of several primitive reversible logic gates, so the quantum cost is several times larger than the gate cost. The overhead of each scrambling level ($Lv1-3$) is compared against the baseline. For reversible circuits that only have 1 ancillary input, the overhead of $Lv1$ is not calculated in Table \ref{tab:cost}.

It is observed that for the scrambling levels that has no extra ancillary input ($baseline$-$Lv2$), the line cost is always 0. For the scrambling levels that have extra ancillary input(s) ($Lv3$), the line cost presents, but the relative overhead decreases with the scale of benchmark circuits.

Besides, $Lv2$ has maximum average gate cost and quantum cost compared to other scrambling levels. This is because, assigning random ancillary inputs for each input pattern forms a most complex function, which will need more gates to realize it.
As a result, for both security- and gate/quantum cost-aware designs, introducing extra ancillary inputs can be a substitution for further improving scrambling levels with less gate/quantum cost while at extra line cost.
It is also observed the gate cost and quantum cost increase significantly for some designs. This is partially because the state-of-art synthesis algorithm is still on its early stage where obtaining a valid implementation is still the primary goal.

\begin{table*}[!htbp]
\centering
\caption{Overhead evaluation of the proposed input-scrambling scheme.}
\label{tab:cost}
\renewcommand{\arraystretch}{1.2}
\resizebox{0.68\textwidth}{0.33\textwidth}{
\begin{tabular}{l|lll|lll|lll}
\hline
\multirow{2}{*}{Benchmark}  & \multicolumn{3}{c|}{Line Cost (\%)} & \multicolumn{3}{c|}{Gate Cost (\%)} & \multicolumn{3}{c}{Quantum Cost (\%)} \\
               &Lv1&Lv2&Lv3&Lv1&Lv2&Lv3&Lv1&Lv2&Lv3\\\hline
decod24\_10       &0 &0  &25       &44     &0&33.3
&54&1.2&88\\
4gt12\_24         &N/A    &0  &20        &N/A      &100&325
   &N/A     &168 &782 \\
decod24\_32      &0  &0  &17       &20    &57&-17
   &19   &77 &-15  \\
f2\_158         &0  &0  &14       &1.4     &81&-35
   &0.6   &96 &-29 \\
rd53\_68         &0  &0  &14       &31     &97&44
   &30   &120 &59  \\
rd73\_69         &0  &0  &11      &-15   &38&30
  &-16  &45 &36  \\
sqn\_203         &0  &0  &11       &-0.5   &55&17
  &-2  &54 &20 \\
squar5\_206      &0  &0  &11         &96   &344&56
   &125  &496 &77  \\
wim\_220         &0  &0 &11       &118   &387&81
   &133  &526 &89 \\
dc1\_142       &0  &0 &10        &98   &428&141
   &112  &629 &189  \\
dist\_144   &0  &0 &10       &28  &48&77
 &29 &51 &79 \\
root\_197       &0  &0 &10       &45  &288&72
  &56  &369 &86  \\
clip\_124       &0  &0 &9        &235  &241&153
  &277 &294 &170\\
rd84\_70         &0  &0 &9        &15   &163&50
 &14 &189 &62 \\
sqr6\_204       &0  &0 &8       &220   &1356&166
   &538  &3698 &353  \\
cm42a\_125      &0  &0 &8         &44    &953&444
   &116   &3347 &1018 \\
dc2\_143         &0  &0 &8         &192   &5257 &529
   &457&	20691&	2129	 \\
mlp4\_184    &0  &0 &8      &145  &483&-3
  &159&	620&	-1.3	\\
pm1\_192     &0  &0 &8       &44    &1109&371
   &116&	3495&	723	\\
alu2\_96     &0  &0 &7       &43   &1977&97
  &49&	3124&	177	\\
alu3\_97     &0  &0 &7       &115   &15922&276
  &168&	32454&	618\\
example2\_156   &0  &0 &7       &43   &1847&97
  &49&	2913&	177\\
inc\_170     &0  &0 &7      &68    &8480&761
  &103&	33166&	2481\\
misex1\_178     &0  &0 &7       &526   &9725&817
   &1122&	28870&	1886\\
sao2\_199       &0  &0 &7        &36   &8054&655
  &18&	17197&	1439\\
%co14\_135        &N/A&0 &7       &N/A    &4602&7932    &N/A   &22909&	43915\\
dk27\_146    &0  &0 &7     &99 &3854&240
 &124&	5149&	320\\
x2\_223          &0  &0 &6     &127  &5169&-19
 &124&	5472&	-18\\
t481\_208    &N/A&0 &4     &N/A  &3156&716
 &N/A&	8014&	1363\\\hline
AVERAGE      &0  &0 &10     &91  &2488&221
&148 &6118&	513\\
\hline
\end{tabular}}
\end{table*}

\section{Conclusion and Future Work}
\label{sec:conclusion}
As the circuit designer has a good knowledge of all gates in reversible circuits, any missing control/gate fault caused by imperfect fabrication is easy to detect through test pattern generation.
Compared to existing fault models, deliberately designed and inserted Trojans are sophisticated and unpredictable, and hence more difficult to detect and counter.

Our investigation shows that, by exploiting the inherent reversibility of these circuits, a set of predefined test patterns can trigger (and hence detect) any Trojan up-to 4 gates in size inserted at any position in the host circuit. The attacker is forced to insert symmetric Trojans that have 5 or more gates. We propose to scramble inputs to thwart such Trojans in the functional mode of host circuits. The inputs to the circuit are scrambled by assigning secret values to the ancillary inputs and introducing extra ancillary inputs. Without knowing these secret values and extra ancillary inputs, a Trojan designer has to risk an increased chance that the Trojan may never be triggered during circuit functional mode, which invalidates his original motivation.

\subsection{Discussion and Future Work}
It is observed in the experiments that the overhead varies significantly for benchmark circuits within the same scrambling level or for the same circuits at different scrambling levels. That is, the increase of overhead (in terms of gate cost or quantum cost) from $baseline$ to $Lv3$ are not in a fixed relation. This is because, the current synthesis approaches do not take the scrambling technique into consideration. Thus developing an overhead- and security-aware synthesis approach will be an important future work.

Besides, this work only considers hardware Trojans that consist of consecutive gates. However, an attacker may design a Trojan that has several distributed parts and each part may consist of multiple gates. The investigation of this kind of Trojan in reversible circuits will also be a part of future work.
\bibliographystyle{IEEEtran}
\bibliography{mybib}

\end{document}